\algnewcommand{\algorithmicgoto}{\textbf{go to}}
\algnewcommand{\GoTo}[1]{\algorithmicgoto~\ref{#1}}
\newcommand{\Desc}[2]{\State \makebox[2em][l]{#1}#2}
\title{Decoding Reed-Solomon codes by solving a bilinear system with a Gr\"obner basis approach}
\newcommand{\F}{\mathbb{F}} \newcommand{\Fq}{\F_q}
\newcommand{\ba}{\mathbf{a}}
\newcommand{\bb}{\mathbf{b}}
\newcommand{\cE}{\mathcal{E}}
\newcommand{\cI}{\mathcal{I}}
\newcommand{\cM}{\mathcal{M}}
\newcommand{\cS}{\mathcal{S}}
\DeclareMathOperator{\coeffv}{\operatorname{coeff}}
\newcommand{\RS}{\mathbf{RS}}
\newcommand{\coeff}[2]{\coeffv\left(#1,#2\right)}
\newcommand{\LM}{LM}
\renewcommand{\deg}[1]{\operatorname{deg}\left(#1\right)}
\newcommand{\eqdef}{\stackrel{\text{\mbox{\tiny def}}}{=}}
\newcommand{\vect}[2]{\langle#1\rangle_{#2}} 
\newcommand{\power}{\pow{1}}
\newcommand{\pow}[1]{q_{#1}}
\newcommand{\supp}{a}
\newcommand{\suppv}{\ba}
\newcommand{\rec}{b}
\newcommand{\recv}{\bb}
\newcommand{\sol}{r}
\newcommand{\Mac}{\cM^{\text{\mbox{\tiny acaulay}}}}
\newcommand{\IInt}[2]{\llbracket #1, #2 \rrbracket}
\newcommand{\modulo}[2]{\left[#1\right]_{#2}}
\newcommand{\high}[1]{#1_{H}}
\newcommand{\inc}{\in_{\textrm{{\scriptsize{coef}}}}}
\theoremstyle{plain}
\newtheorem{Th}{Theorem}
\newtheorem{theorem}[Th]{Theorem}
\newtheorem{lemma}[Th]{Lemma}
\newtheorem{proposition}[Th]{Proposition}
\newtheorem{fact}[Th]{Fact}
\newtheorem{Remark}[Th]{Remark}
\begin{document}

\author[1,3]{Magali Bardet \thanks{magali.bardet@univ-rouen.fr}}
\author[2,3]{Rocco Mora \thanks{rocco.mora@inria.fr}}
\author[3]{Jean-Pierre Tillich \thanks{jean-pierre.tillich@inria.fr}}
\affil[1]{LITIS, University of Rouen Normandie}
\affil[2]{Sorbonne Universit\'es, UPMC Univ Paris 06}
\affil[3]{Inria, Team COSMIQ,
    2 rue Simone Iff, CS 42112,
    75589 Paris Cedex 12, France}
    \date{}
\maketitle

\begin{abstract} Decoding a Reed-Solomon code can be modeled by a bilinear system which can be solved by Gr\"obner basis techniques. We will show that in this particular case, these techniques are much more efficient than for generic bilinear systems with the same number of unknowns and equations (where these techniques have exponential complexity). Here we show that they are able to solve the problem in polynomial time up to the Sudan radius. Moreover, beyond this radius these techniques recover automatically polynomial identities that are at the heart of improvements of the power decoding approach for reaching the Johnson decoding radius. They also allow to derive new polynomial identities that can be used to derive new algebraic decoding algorithms for Reed-Solomon codes. We provide numerical evidence that this sometimes allows to correct efficiently slightly more errors than the Johnson radius.  
\end{abstract}

\section{Introduction}

\par{\bf Decoding a large number of errors in Reed-Solomon codes.}
A long-standing open problem in algebraic coding theory
 was that of decoding Reed-Solomon codes beyond the
error-correction radius, $
\frac{1-R}{2}$ (where $R$ stands for the code rate). This problem was solved in a breakthrough paper by Sudan in \cite{S97b} where it was shown that there exists an algebraic decoder that works up to a fraction of errors $1-\sqrt{2R}$ (the so called Sudan radius here). This was even improved later on by Guruswami and Sudan in \cite{GS98} with a decoder that works up to the Johnson radius $1-\sqrt{R}$. This represents in a sense the limit for such decoders since these decoders are list decoders that output all codewords up to this radius and beyond this radius the list size is not guaranteed to be polynomial anymore. However, if we do not insist on having a decoder that outputs all codewords within a certain radius, or if we just want  a decoder that is successful most of the time on the $q$-ary symmetric channel of crossover probability $p$, then we can still hope to have an efficient decoder beyond this bound. Moreover, it is even interesting to investigate if there are decoding algorithms of say subexponential complexity above the radius $1-\sqrt{R}$. 

\medskip
\par{\bf A Gr\"obner basis approach.} Our approach for decoding is to model  decoding by an algebraic system and then to solve it with Gröbner bases techniques. At first sight, it might seem that  this approach is not new in this setting: such techniques  have already been used here, mainly to solve algebraic systems involved in the Guruswami-Sudan approach~\cite{LO06b,LO08,MM11,T10,ZS10,BW18}. They were used up to now on systems where such techniques are expected to run efficiently just because the number of variables was very small for instance: for instance~\cite{LO06b,LO08,MM11,T10} consider only two variables $X$ and $Y$ corresponding to the variables of the interpolation polynomial which is sought. 

Our approach in this paper is different. We consider the classical
bilinar system~\eqref{eq:bilinear} modeling the decoding problem. This is at first sight a no go, because solving generic bilinear systems with the same number of
variables and equations  is here of
exponential complexity. However it will turn that for the system at hand this approach is astonishingly efficient: we will for instance show that it runs in polynomial time when the fraction of errors is below the
Sudan radius.
Indeed, consider a $k$-dimensional Reed-Solomon code of length $n$ over $\F_q$ with support $\suppv = (\supp_i)_{1 \leq i \leq n} \in \F_q^n$:
$$\RS_k(\suppv)
=\{(P(\supp_i))_{1 \leq i \leq n}:\;P \in \F_q[X],\;\deg P < k\}.$$ Let  $\recv =(\rec_i)_{1 \leq i \leq n}$ be the received word,
$\cE$ be the set of positions  in error and define the error locator as usual
\begin{equation}
\label{eq:error_locator}
\Lambda(X) \eqdef \Pi_{i \in \cE} (X-a_i).
\end{equation}
From this, we can write the bilinear system with unknowns the coefficients $p_i$ of the polynomial 
$P(X) = \sum_{i=0}^{k-1} p_i X^i$ corresponding to the codeword that was sent and the coefficients $\lambda_j$ of the error locator polynomial $\Lambda(X) = X^t+ \sum_{j=0}^{t-1} \lambda_j X^j$ if we assume that there were $t$ errors. We have $n$ bilinear equations in the $k+t$ variables $p_i$'s and $\lambda_j$'s coming from the $n$ relations
$
P(\supp_\ell)\Lambda(\supp_\ell) = \rec_\ell \Lambda(\supp_\ell), \;\;\ell \in \IInt{1}{n}, 
$
namely 
\begin{equation}
\label{eq:bilinear}
\sum_{i=0}^{k-1}\sum_{j=0}^{t} \supp_\ell^{i+j} p_i\lambda_j = \sum_{j=0}^{t} \rec_\ell \supp_\ell^j \lambda_j,\;\;\ell \in \IInt{1}{n} \;\;\text{and $\lambda_t=1$.}
\end{equation}

\par{\bf Gr\"obner basis techniques: a simple and automatic way for obtaining a polynomial time  algorithm in our case.} Standard Gr\"obner bases techniques can be used to solve this system, however solving \eqref{eq:bilinear} is much easier than solving a generic bilinear system with the same parameters. In particular these techniques solve typically in polynomial time the decoding problem when the fraction of errors is below the Sudan radius. This is explained  in Section \ref{ss:Sudan}. The reason why the Gr\"obner basis approach works in polynomial time is related to power-decoding \cite{SSB10,N14} and can be explained by similar arguments. However, the nice thing about this Gr\"obner basis approach is that the algorithm itself is very simple and can be given without any reference to power decoding (or the Sudan algorithm). 
The computation of the Gr\"obner basis reveals degree falls which are instrumental for its very low complexity. Understanding these degree falls can be explained by the polynomial equations used by power decoding.  However, this simple algorithm also appears to be very powerful beyond the Sudan bound: experimentally it seems that it is efficient up to the Johnson radius and that it is even able to correct more errors in some cases than the refinement of the original power decoding algorithm \cite{N18} (which reaches asymptotically the Johnson radius).
This is demonstrated in Section \ref{sec:experiments}. 

\par{\bf Understanding the nice behavior of the Gr\"obner basis approach.}
Moreover, 
trying to understand theoretically why this algorithm behaves so well, is not only explained by the polynomial relations which are at the heart of the power decoding approach, it also reveals new polynomial relations that are not exploited by the power decoding approach as shown in Section \ref{sec:explanation}. In other words, this approach not only gives an efficient algorithm, it also exploits other polynomial relations.
It seems fruitful to understand and describe them, this namely paves the road towards new algebraic decoders of Reed-Solomon codes.

\par{\bf Notation.} Throughout the paper we will use the following notation.
The integer interval $\{a,a+1,\cdots,b\}$ will be denoted by $\IInt{a}{b}$. For a polynomial $Q(X)=\sum_{i=0}^m q_i X^i$, $\coeff{Q(X)}{X^s}$ stands for the coefficient $q_s$ of $X^s$ in $Q(X)$. For two polynomials $Q(X)$ and $G(X)$, $\modulo{Q(X)}{G(X)}$ stands for the remainder of $Q(X)$ divided by $G(X)$.

\section{The Algorithm}\label{sec:algorithm}

Consider an algebraic system of equations 
\begin{equation}
\label{eq:system}
\left\{ \begin{array}{lcr} f_1(x_1,\cdots,x_\ell)&=&0 \\
&\vdots  & \\
f_m(x_1,\cdots,x_\ell)&=&0
\end{array}
\right.
\end{equation}
where the $f_i$'s are polynomials in $x_1,\cdots,x_\ell$. Such systems can be solved by Gr\"obner basis techniques (see \cite{CLO15} for instance). To simplify the discussion assume that we have a unique solution to the algebraic system \eqref{eq:system} and that the polynomial ideal $\cI$ generated by the $f_i$'s is radical, meaning that whenever there is a polynomial $f$ and a positive integer $s$ such that $f^s$ is in $\cI$, then  $f$ is in $\cI$. This
seems to be the typical case for~\eqref{eq:bilinear} when the number of errors  is below the Gilbert-Varshamov bound. In such a case, the reduced Gr\"obner basis of the ideal $\cI$ is
given by the set $\{x_1-\sol_1,\cdots,x_n-\sol_\ell\}$ for any admissible monomial ordering, where $(\sol_1,\cdots,\sol_\ell)$ stands for the unique solution of \eqref{eq:system} (this is standard, see for instance \cite[Lemma 2.4.3, p.40]{B04}). Recall here that a Gr\"obner basis of a polynomial ideal is defined for a given admissible monomial ordering\footnote{This is a total ordering $<$ of the monomials such that (i) $m < m' \implies mt < m't$  for any monomial $t$ (ii) every subset of monomials has a smallest element.} as a generating set $\{g_1,\cdots,g_s\}$ of the ideal such that the 
ideal generated by the leading monomials  $\LM(g_i)$ (where  $\LM(g)$ is the largest monomial in  $g$) of the $g_i$'s coincides with the ideal generated by all the leading monomials of the elements of $\cI$:
$$
\langle \LM(g_1),\cdots,\LM(g_s) \rangle = \langle \LM(f): f \in \cI\rangle.
$$

 We will adopt Lazard's point of view \cite{L83} to compute a Gr\"obner basis and use Gaussian elimination on the Macaulay matrices associated to the system.
 The main known and efficient algorithm for this is Faug\`ere's F4 algorithm~\cite{F99}, see ~\cite{CLO15} for background on the subject.

We
recall that the Macaulay matrix $\Mac_D(\cS)$ in degree $D$ of a set $\cS=\{f_1,\cdots,f_m\}$ of polynomials  is the matrix whose columns correspond to the
monomials of degree $\leq D$ sorted in descending order w.r.t.~a chosen
monomial ordering, whose rows correspond to the polynomials $t f_i$ for
all $i$ where $t$ is a monomial of degree $\leq D-\deg{f_i}$, and whose
entry in row $tf_i$ and column $u$ is the coefficient of the monomial
$u$ in the polynomial $tf_i$. A Gr\"obner basis for the system can be computed by computing a row echelon form
of $\Mac_D$ for large enough $D$ \cite{L83} and~\cite[chap. 10]{CLO15}. However, this way of solving \eqref{eq:bilinear} is very inefficient 
(unless $t \leq \frac{n-k}{2}$ where direct row echelonizing \eqref{eq:bilinear} is enough) because during the Gaussian elimination process we have a sequence of degree falls which are instrumental for computing a Gr\"obner basis by staying at a very small degree (this appears clearly if we use for instance  Faug\`ere's F4 algorithm~\cite{F99} on 
\eqref{eq:bilinear}).

A {\em degree fall} is a polynomial combination $\sum_{i=1}^m g_i f_i$ of the $f_i$'s which satisfies
$$
0< s \eqdef \deg{\sum_{i=1}^m g_i f_i} < \max_{i=1}^m
\deg{g_i f_i}.
$$
We say that $\deg{\sum_{i=1}^m g_i f_i} $ is a degree fall of {\em degree} $s$. 

The simplest example of such a degree fall occurs in \eqref{eq:bilinear} when $t < n-k$. Here there are linear combinations of the
bilinear equations of \eqref{eq:bilinear} giving linear equations. This can be verified by  performing the change of variables $z_s \eqdef \sum_{i,j:i+j=s} p_i \lambda_j$ in \eqref{eq:bilinear} and get
the system
\begin{equation}
\label{eq:bilinearprime}
\sum_{s=0}^{t+k-1} \supp_\ell^{s} z_s = \sum_{j=0}^{t} \rec_\ell \supp_\ell^j \lambda_j,\;\;\ell \in \IInt{1}{n}.
\end{equation}
In other words, by eliminating the $z_s$'s in these equations we obtain linear equations involving only the $\lambda_i$'s.
When $t \leq \frac{n-k}{2}$ there are enough 
such equations to recover from them the $\lambda_i$'s and by substituting for them in \eqref{eq:bilinear} the $p_i$'s by solving again a linear system. Of course, this is well known, and there are much more efficient algorithms for solving this system but still it is interesting to notice that the Gr\"obner basis approach already yields a polynomial time algorithm for the particular bilinear system \eqref{eq:bilinear} despite being exponential (for a large range of parameters) for generic bilinear systems with the same number of unknowns and equations as \eqref{eq:bilinear} \cite{FSS11,S12}.

A slightly less trivial degree fall behavior is obtained in the case the fraction of errors is Sudan's radius.  Here, after substituting for the $\lambda_i$'s which can be expressed as linear functions of the other $\lambda_i$'s by using the aforementioned linear equations involving the $\lambda_i$'s we obtain  new bilinear equations $f'_1,\cdots,f'_m$. It turns out that we can perform linear combinations on these $f'_i$'s to eliminate the monomials of degree $2$ in them and  derive new linear equations involving only the $\lambda_i$'s. This is proved in Subsection \ref{ss:Sudan}. This process can be iterated and there are typically enough such linear equations to recover the $\lambda_i$'s in this way as long as $t$ is below or equal to the Sudan decoding radius. As explained above, this allows to recover the right codeword by plugging the values for $\lambda_i$ in \eqref{eq:bilinear} and solving the corresponding linear system in the $p_i$'s. 

Note that here, and in all the paper, we are considering \emph{graded}
monomial orderings (a monomial of degree $d$ is always smaller than a
monomial of degree $d' >d$).  Through this paper, we use the notion of
affine $D$-Gr\"obner basis, which is the truncated Gr\"obner basis
obtained by ignoring computations in degree greater than $D$. It is
well known that there exists a $D$ such that a $D$-Gr\"obner basis is
indeed a Gr\"obner basis.  We describe here 
Algorithm~\ref{algo:simpleGrobner} which computes a $D$-Gr\"oner basis
of a given system through linear algebra. It is less efficient than standard algorithms but has the merit of being simple and showing what is computed during such algorithms. It is also of polynomial time complexity when $D$ is fixed. It uses the function
 $\texttt{Pol}(M)$ that returns the
polynomials represented by the rows of a Macaulay matrix  $M$. 

\begin{algorithm}
\caption{$D$-Gröbner Basis} \label{algo:simpleGrobner} 
\begin{algorithmic}
\Input
\Desc{$D$}{Maximal degree,}
\Desc{$\mathcal S$ }{$= \{f_1,\cdots,f_m\}$ set of polynomials.}
\EndInput
\Repeat 
\State{$\cS \gets \texttt{Pol}(\texttt{EchelonForm}(\Mac_D(\cS)))$}
\Until{$\dim_{\mathbb F_q}\cS$ has not increased.}
\State{Output $\cS$.}
\end{algorithmic}
\end{algorithm}

It is clear that Algorithm~\ref{algo:simpleGrobner} terminates and has
a polynomial complexity if $D$ is fixed. The previous remarks show
that we can decode up to the Sudan decoding radius with
$D=2$.  However, when the number of errors becomes bigger, $D=2$ is not
enough to exhibit more degree falls. We have to go  a higher degree.
However, already  taking $D=3$ yields  interesting degree falls that are instrumental to the generalization of the power decoding approach of \cite{N18} decoding up to the Johnson radius.

\section{A partial explanation of the algebraic behavior}\label{sec:explanation}
\subsection{Correcting up to the Sudan bound in polynomial time}\label{ss:Sudan}
The efficiency of Algorithm \ref{algo:simpleGrobner} is already demonstrated by the fact that choosing 
$D=2$ in it corrects in polynomial time as many errors as Sudan's algorithm. Choosing $D=2$ in Algorithm \ref{algo:simpleGrobner} means that we just keep the equations of degree $2$ and try to produce new linear equations by  linear combinations of the equations of degree $2$ aiming at eliminating the degree $2$ monomials. 
 The efficiency of this 
algorithm is related to power decoding \cite{SSB10}: the algorithm finds automatically the linear equations exploited by the power decoding approach. It is here convenient in order to explain the effectiveness of  the Gr\"obner basis approach to bring in an equivalent algebraic system which is basically the key equation implicit in Gao's decoder \cite{G03a} (and the one used in the power decoding approach) which is the following polynomial
equation:
\begin{equation}
\label{eq:bilinear_better}
 P(X)\Lambda(X) \equiv R(X)\Lambda(X) \mod G(X)
\end{equation}
where $R(X)$ is the polynomial of degree $\leq n-1$ interpolating the received values, i.e
$$
R(a_\ell)=\rec_\ell,  \;\;\ell \in \IInt{1}{n}\;\;\text{and } G(X) \eqdef \Pi_{\ell=1}^n (X-\supp_\ell).
$$
Note that these two polynomials are immediately computable by the receiver (and $G$ can be precomputed).
By using the same unknowns as in \eqref{eq:bilinear}, namely the coefficients of $P(X)$ and $\Lambda(X)$ we obtain a bilinear system with $n$ equations. It is readily seen that 
\begin{proposition}
The bilinear systems \eqref{eq:bilinear} and \eqref{eq:bilinear_better} are equivalent: \eqref{eq:bilinear_better} can be obtained from linear combinations of \eqref{eq:bilinear} and vice versa.
\end{proposition}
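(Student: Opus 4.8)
The plan is to recognize that both systems are reformulations of the single divisibility condition that $G(X)$ divides $(P(X)-R(X))\Lambda(X)$, and that the two systems differ only by an invertible Vandermonde change of basis on the $n$ equations. So the heart of the argument is to exhibit one fixed, unknown-independent matrix over $\Fq$ carrying one set of left-hand sides to the other.

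First I would set $H(X) \eqdef (P(X)-R(X))\Lambda(X)$ and observe that each equation of \eqref{eq:bilinear} is exactly the scalar condition $H(\supp_\ell)=0$. Indeed $P(\supp_\ell)\Lambda(\supp_\ell)=\sum_{i,j}\supp_\ell^{\,i+j}p_i\lambda_j$ and $R(\supp_\ell)=\rec_\ell$ by definition of $R$, so the $\ell$-th equation of \eqref{eq:bilinear} reads $(P(\supp_\ell)-\rec_\ell)\Lambda(\supp_\ell)=H(\supp_\ell)=0$. On the other side, writing $r(X)\eqdef\modulo{H(X)}{G(X)}=\sum_{s=0}^{n-1} r_s X^s$, the system \eqref{eq:bilinear_better} is by definition the vanishing of the $n$ coefficients $r_0,\dots,r_{n-1}$. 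Since reduction modulo $G$ is $\Fq$-linear and the coefficients of $H$ are bilinear in the $p_i$ and $\lambda_j$, each $r_s$ is a bilinear form in the unknowns, so \eqref{eq:bilinear_better} is genuinely a bilinear system in the same variables with $n$ equations.

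Next I would connect the two vectors of left-hand sides. Because $G(\supp_\ell)=0$ for every $\ell$ and $H=QG+r$ for the quotient $Q$, one gets $H(\supp_\ell)=r(\supp_\ell)=\sum_{s=0}^{n-1} r_s\,\supp_\ell^{\,s}$, with no degree assumption on $H$ needed. Stacking over $\ell$, the vector $(H(\supp_1),\dots,H(\supp_n))^{\!\top}$ of left-hand sides of \eqref{eq:bilinear} equals $V\,(r_0,\dots,r_{n-1})^{\!\top}$, where $V=(\supp_\ell^{\,s})_{1\le\ell\le n,\ 0\le s\le n-1}$ is the Vandermonde matrix of the support.

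Finally I would invoke that the $\supp_\ell$ are pairwise distinct, being the support of $\RS_k(\suppv)$, so $V$ is invertible over $\Fq$. Consequently each equation of \eqref{eq:bilinear} is the fixed $\Fq$-linear combination of the equations of \eqref{eq:bilinear_better} given by a row of $V$, and conversely each equation of \eqref{eq:bilinear_better} is the $\Fq$-linear combination of the equations of \eqref{eq:bilinear} given by the corresponding row of $V^{-1}$; this is exactly the claimed equivalence. There is no serious obstacle here: the only points deserving a line of care are that the entries of $V$ lie in $\Fq$ and do not involve the unknowns, so these are admissible linear combinations of the equations, and that reduction modulo $G$ preserves bilinearity, so \eqref{eq:bilinear_better} has the same unknowns and the same bilinear shape as \eqref{eq:bilinear}.
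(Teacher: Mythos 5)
Your proof is correct and takes essentially the same route as the paper: both identify the equations of \eqref{eq:bilinear} with the evaluations $H(\supp_\ell)$ of $H=(P-R)\Lambda$ and the equations of \eqref{eq:bilinear_better} with the coefficients of $H \bmod G$, and both rest on the invertibility of the evaluation map on polynomials of degree $<n$ at the $n$ distinct support points. The paper packages that invertibility as a Lagrange-interpolation ``Fact'' while you make the Vandermonde matrix and its inverse explicit, but these are the same argument.
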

This follows on the spot from 
\begin{fact}
For any polynomial $Q(X)  \in \Fq[X]$ of degree $<n$, the coefficients of $Q$ can be expressed as linear combinations of $Q(\supp_1),\cdots,Q(\supp_n)$.
\end{fact}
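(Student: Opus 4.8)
The plan is to recognize the map sending the coefficient vector of $Q$ to its vector of evaluations as a linear isomorphism, namely the Vandermonde map attached to the support. Writing $Q(X)=\sum_{s=0}^{n-1}q_s X^s$ (allowing $q_s=0$, so that every degree $<n$ is covered), the $n$ scalars $Q(\supp_\ell)=\sum_{s=0}^{n-1}q_s\,\supp_\ell^{\,s}$ are precisely the entries of the matrix-vector product $V\mathbf{q}$, where $\mathbf{q}=(q_0,\dots,q_{n-1})^{\top}$ and $V=(\supp_\ell^{\,s})_{1\le \ell\le n,\,0\le s\le n-1}$ is the $n\times n$ Vandermonde matrix built from the support points $\supp_1,\dots,\supp_n$. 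This reduces the claim to the invertibility of $V$.

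Next I would invoke that invertibility directly. The determinant of $V$ equals $\prod_{1\le m<\ell\le n}(\supp_\ell-\supp_m)$, which is nonzero precisely because the support points are pairwise distinct — a hypothesis built into the Reed-Solomon setup, equivalently the statement that $G(X)=\prod_{\ell=1}^n(X-\supp_\ell)$ is separable of degree $n$. Hence $V$ is invertible over $\Fq$ and $\mathbf{q}=V^{-1}(Q(\supp_1),\dots,Q(\supp_n))^{\top}$. Reading off the row of $V^{-1}$ corresponding to $q_s$ gives $\coeff{Q(X)}{X^s}=q_s=\sum_{\ell=1}^n (V^{-1})_{s\ell}\,Q(\supp_\ell)$, i.e. each coefficient is a fixed $\Fq$-linear combination of the evaluations, with weights depending only on the support and not on $Q$. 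This is exactly the assertion.

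Alternatively I could argue constructively through Lagrange interpolation: since $\deg{Q}<n$ we have $Q(X)=\sum_{\ell=1}^n Q(\supp_\ell)\,L_\ell(X)$ with $L_\ell(X)=\prod_{m\ne \ell}\frac{X-\supp_m}{\supp_\ell-\supp_m}$, so that $q_s=\sum_{\ell=1}^n Q(\supp_\ell)\coeff{L_\ell(X)}{X^s}$ exhibits the linear dependence explicitly. Either route is routine and there is no genuine obstacle; the only point requiring (immediate) verification is the distinctness of the support points, which is what makes the Vandermonde system solvable in the first place.
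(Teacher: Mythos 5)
Your proposal is correct. The paper's own proof is exactly your ``alternative'' route: it observes that $Q$ coincides with its Lagrange interpolation polynomial $\sum_{\ell=1}^n Q(\supp_\ell)\,\prod_{j\neq\ell}\frac{X-\supp_j}{\supp_\ell-\supp_j}$, from which the coefficients are visibly linear in the values $Q(\supp_1),\dots,Q(\supp_n)$. Your primary route via invertibility of the Vandermonde matrix $V=(\supp_\ell^{\,s})$ is a mild repackaging of the same fact --- the rows of $V^{-1}$ are precisely the coefficient vectors of the Lagrange basis polynomials --- so the two arguments buy the same thing; the Vandermonde phrasing makes the linear-algebraic structure (a fixed isomorphism between coefficient space and evaluation space, depending only on the support) slightly more explicit, while the Lagrange phrasing is more directly constructive. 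You are also right to flag that the only hypothesis doing any work is the pairwise distinctness of the support points, which the Reed--Solomon setup guarantees since $\suppv\in\Fq^n$ is a support of length $n$.
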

This  fact is just a consequence that $Q$ coincides with its  interpolation polynomial on the points $(a_\ell,Q(a_\ell))$ and that this interpolation polynomial is given by
$$
Q(X) = \sum_{\ell=1}^n Q(\supp_\ell) \frac{\Pi_{j \neq \ell} (X-a_j)}{\Pi_{j \neq \ell} (\supp_\ell - \supp_j)}.
$$
To understand now why \eqref{eq:bilinear_better} can be derived from \eqref{eq:bilinear}, we just notice that if we bring in
\begin{eqnarray*}
Q(X) & \eqdef & P(X)\Lambda(X)-R(X)\Lambda(X)\\
S(X) & \eqdef & Q(X) \mod G(X),
\end{eqnarray*}
then 
\begin{itemize}
\item \eqref{eq:bilinear} amounts to write $Q(a_\ell)=0$ for $\ell$ in $\IInt{1}{n}$  and to express the $Q(\supp_\ell)$'s as quadratic forms in the $\lambda_i$'s and the $p_j$'s.
\item Since $Q(\supp_\ell)=S(\supp_\ell)$ for all $\ell$ in $\IInt{1}{n}$ and since $S$ is of degree $< n$ we can use the previous fact and express its coefficients linearly in terms of the $S(\supp_\ell)=Q(\supp_\ell)$'s.
\item Since \eqref{eq:bilinear_better} is nothing but  expressing that the coefficients of $S(X)$ are all equal to $0$, we obtain that the equations of \eqref{eq:bilinear_better} can be obtained from linear combinations of the equations of 
\eqref{eq:bilinear}.
\end{itemize}

Conversely since $S(\supp_\ell)$ can be written as a linear combination of the coefficients of $S(X)$, the quadratic equations in the $\lambda_i$'s and the $p_i$'s obtained by writing $S(\supp_\ell)=0$ are linear combinations of the quadratic equations given by \eqref{eq:bilinear_better}. These equations $S(\supp_\ell)=0$ coincide with the 
equations in \eqref{eq:bilinear}, since $Q(\supp_\ell)=S(\supp_\ell)$ for all $\ell$ in $\IInt{1}{n}$.

The point of \eqref{eq:bilinear_better} is that 
\begin{itemize}
\item These equations are more convenient to work with to understand what is going on algebraically during the Gr\"obner basis calculations of Algorithm \ref{algo:simpleGrobner}.
\item They give directly $n-k-t+1$ linear  equations, since (i) the coefficient of $S(X)$ of degree $d \in \IInt{t+k}{n-1}$ coincides with the coefficient of the same degree in $-R(X)\Lambda(X) \mod G(X)$ since 
$\Lambda(X)P(X)$ is of degree $\leq t+k-1$; (ii) the coefficient of $S(X)$ of degree $t+k-1$ is equal to 
$p_{k-1}-\coeff{\modulo{\Lambda(X)R(X)}{G(X)}}{X^{t+k-1}}$ because $\Lambda(X)$ is monic and of degree $t$.
\end{itemize}
With this at hand we can now prove that 
\begin{proposition} \label{prop:S_1_elim}
Let $\power\eqdef \max\{u:t+(k-1)u \leq n-1\}=\left\lfloor\frac{n-t-1}{k-1}\right\rfloor$.
All affine functions in the $\lambda_i$'s  
of the form $\coeff{\modulo{\Lambda(X)R^j(X)}{G(X)}}{X^u}$ for $j \in \IInt{1}{\power}$  and $u \in \IInt{t+(k-1)j+1}{n-1}$ are in the linear span of the set $\cS$ output by Algorithm \ref{algo:simpleGrobner} when
$D=2$.
\end{proposition}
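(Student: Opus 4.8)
The plan is to work throughout with the equivalent key-equation form: set $S\eqdef\modulo{\Lambda(P-R)}{G}$, so that the bilinear system is $\{\coeff{S}{X^v}=0:v\in\IInt{0}{n-1}\}$, and write $T_j\eqdef\modulo{\Lambda R^j}{G}$, so that the functions to be located are exactly the $\coeff{T_j}{X^u}$. Let $V$ denote the $\Fq$-linear span of the terminal set $\cS$. Two properties of $V$ will be used. First, $V$ contains every bilinear generator $\coeff{S}{X^v}$ (these are $\Fq$-combinations of \eqref{eq:bilinear}). Second --- and this is where the fixed-point nature of $\cS$ and the \emph{graded} ordering enter --- if $f\in V$ has $\deg f\le 1$ and $x$ is any of the variables $p_i,\lambda_j$, then $xf\in V$: at termination the row space of $\Mac_2(\cS)$ equals $V$, and with a graded order the degree-$\le 1$ part of $V$ is spanned by genuinely degree-$\le 1$ rows, so multiplying such a row by a single variable produces a degree-$\le 2$ element that already lies in the row space. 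I would prove the proposition by induction on $j$, the assertion at level $j$ being that $\coeff{T_{j}}{X^u}\in V$ for every $u\ge t+(k-1)j+1$.

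The engine of the induction is a single polynomial identity. Since $T_{j-1}=\modulo{\Lambda R^{j-1}}{G}$ we have $\modulo{P\,T_{j-1}}{G}=\modulo{P\Lambda R^{j-1}}{G}$, and using $\Lambda(P-R)\equiv S\pmod{G(X)}$ one gets
\[
\Lambda R^{j}\;\equiv\;P\,\Lambda R^{j-1}-R^{j-1}S\pmod{G(X)},
\]
whence, reading off the coefficient of $X^u$,
\[
\coeff{T_j}{X^u}\;=\;\coeff{\modulo{P\,T_{j-1}}{G}}{X^u}\;-\;\coeff{\modulo{R^{j-1}S}{G}}{X^u}.
\]
The second term is harmless: $\modulo{R^{j-1}S}{G}=\sum_{v}\coeff{S}{X^v}\,\modulo{X^{v}R^{j-1}}{G}$, so its $X^u$-coefficient is an $\Fq$-combination of the generators $\coeff{S}{X^v}$ and therefore lies in $V$. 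Everything thus reduces to controlling $\coeff{\modulo{P\,T_{j-1}}{G}}{X^u}$.

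Here I would split $T_{j-1}=L+H$, where $L$ collects the monomials $X^w$ with $w\le t+(k-1)(j-1)$ and $H$ the remaining (high) ones. By the induction hypothesis every coefficient of $H$ lies in $V$ and has degree $\le 1$, so each product $p_i\coeff{T_{j-1}}{X^w}$ (with $X^w$ occurring in $H$) lies in $V$ by the second property above; since $\coeff{\modulo{P\,H}{G}}{X^u}$ is an $\Fq$-combination of such products, it lies in $V$. For the low part, $\deg(P\,L)\le (k-1)+\bigl(t+(k-1)(j-1)\bigr)=t+(k-1)j\le n-1$ precisely because $j\le\power$; hence $P\,L$ is not reduced modulo $G$ and $\coeff{\modulo{P\,L}{G}}{X^u}=\coeff{P\,L}{X^u}=0$ for every $u\ge t+(k-1)j+1$. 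Combining the three contributions gives $\coeff{T_j}{X^u}\in V$, which closes the induction; the base case $j=1$ is just $T_0=\Lambda$, for which $H=0$ and $\coeff{P\,L}{X^u}=\coeff{\Lambda P}{X^u}=0$ for $u\ge t+k$, so that $\coeff{T_1}{X^u}=-\coeff{S}{X^u}\in V$ reproduces the direct linear equations of \eqref{eq:bilinear_better}.

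The main obstacle is the second structural property of $V$, namely that $\cS$ --- being the fixed point of Algorithm~\ref{algo:simpleGrobner} at $D=2$ --- is stable under multiplying its degree-$\le 1$ members by a single variable while remaining in degree $2$; this is exactly the mechanism by which the level-$(j-1)$ linear equations, once present in $\cS$, force the level-$j$ ones into $\cS$ on the next pass, and it is what must be argued carefully using the graded order. The only other delicate point is the degree bookkeeping that makes the low-part coefficient vanish, which holds exactly up to $j=\power$; this is precisely the arithmetic that pins the reach of the $D=2$ algorithm to the Sudan radius.
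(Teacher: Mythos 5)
Your proposal is correct and follows essentially the same route as the paper's proof: an induction on $j$ driven by the telescoping identity $\Lambda R^{j}\equiv P\,\Lambda R^{j-1}-R^{j-1}\Lambda(P-R)\pmod{G}$, the splitting of $\modulo{\Lambda R^{j-1}}{G}$ into a low part killed by the degree bound $t+(k-1)j\le n-1$ and a high part whose coefficients are already-found affine functions. The only (welcome) difference is that you make explicit, via the fixed-point property of the Macaulay row space under a graded order, why multiplying an already-obtained affine relation by a variable keeps you inside the span of $\cS$ at $D=2$ --- a step the paper handles more informally by speaking of polynomials whose degree drops ``after elimination of variables.''
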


\begin{Remark}\label{rem:easy} The fact that these are indeed affine functions follows on the spot from generalizing the degree considerations above: $\Lambda(X)P(X)^j$ is of degree $\leq t+(k-1)j$. 
\end{Remark}

\begin{proof}
Notice that from the equivalence we have just proved,  the space generated by $\cS$ contains initially (and therefore all the time) the space of affine functions in the $\lambda_i$'s   generated by 
$$
\coeff{\modulo{-\Lambda(X)R(X)}{G(X)}}{X^u}=\coeff{\modulo{\Lambda(X)P(X)-\Lambda(X)R(X)}{G(X)}}{X^u}, \;\text{ for all $u \in \IInt{t+k}{n-1}$.}
$$

Now proceed by induction on $j$, and assume that at some point the space generated by $\cS$ contains the linear span of  the  affine functions 
$$\coeff{\modulo{-\Lambda(X)R^j(X)}{G(X)}}{X^u}=\coeff{\modulo{\Lambda(X) P(X)^j-\Lambda(X) R(X)^j}{G(X)}}{X^u},$$
for all  $u \in \IInt{t+(k-1)j+1}{n-1}$ where $j$ is some integer in the interval
$\IInt{1}{\power-1}$.
Note that
\begin{align}
\label{term:start}   &\left( \Lambda P^{j+1}-\Lambda R^{j+1}\right) \mod G \\
 \nonumber  =& \left( P(\Lambda P^j-\Lambda R^j) + R^j(\Lambda P-\Lambda R)\right) \mod G \\
 \label{term:end}  =& \left(P(\Lambda P^j-\Lambda R^j \mod G) + R^j(\Lambda P-\Lambda R \mod G)\right) \mod G.
\end{align}
We use the equality between the polynomials \eqref{term:start} and \eqref{term:end} to claim that their coefficients 
should coincide for all the degrees $\IInt{t+(j-1)(k-1)}{n-1}$. 
Note now that after the elimination of variables performed so far,  this makes that all coefficients of degree in $\IInt{t+(k-1)j+1}{n-1}$ in  $\Lambda P^j-\Lambda R^j \mod G$ vanish, since they were affine functions by the induction hypothesis and become $0$ after the variable elimination step. This implies 
 that $\Lambda P^j-\Lambda R^j \mod G$ becomes a polynomial of degree $ \leq t+(k-1)j$ after elimination of variables.
 Therefore $P(\Lambda P^j-\Lambda R^j \mod G)$ is a polynomial of degree $\leq t+(k-1)(j+1)$.  
 From the equality of  the polynomials \eqref{term:start} and \eqref{term:end}, this implies that 
 the coefficient of degree $u$ in $\left( \Lambda P^{j+1}-\Lambda R^{j+1}\right) \mod G$ coincides with the coefficient of the same degree in $(R^j(\Lambda P-\Lambda R \mod G)) \mod G$ for 
 $u$ in $\IInt{t+(k-1)(j+1)+1}{n-1}$.
 We observe now that the last coefficient 
is nothing but a linear combination of the coefficients of $\Lambda P-\Lambda R \mod G$, which are precisely 
the initial polynomial equations. Since the polynomial  $\left( \Lambda P^{j+1}-\Lambda R^{j+1}\right) \mod G$ 
has all its coefficients that are affine functions in the $\lambda_i$'s  by Remark \ref{rem:easy} for 
all the degrees $u \in \IInt{t+(k-1)(j+1)+1}{n-1}$
 we obtain that after the Gaussian elimination step, $\cS$ contains the 
space generated by these aforementioned affine functions. This proves the proposition by induction on $j$. 
\end{proof}

These linear equations that we produce coincide exactly with the linear equations produced by the power decoding approach \cite{SSB10} and this allows to correct as many errors as the power decoding approach based on the 
same assumption, namely that they are all independent, which is actually the typical scenario. However, contrarily to power decoding that is bound to make such an assumption to work, the Gr\"obner basis is more versatile, as it allows to decode even without this assumption as explained in Section \ref{sec:experiments}.

\subsection{Decoding up to the Johnson radius}
Power decoding  \cite{SSB10} was  generalized in \cite{N18} to decode up to the Johnson radius by bringing in the ``error evaluator'' polynomial $\Omega(X)$ of degree $\leq t-1$ defined by \begin{equation}
\label{eq:interpolation}
\Omega(\supp_i)=-e_i,\;\;\text{for all $i \in \IInt{1}{n}$ for which $e_i \neq 0$.}
\end{equation}
where $e_i$ is  the error value at position $i$. In other words, it is the interpolation polynomial defined by \eqref{eq:interpolation} for all $i$ in error. This  crucially relies on \cite[Lemma 2.1]{N18}:
\begin{equation}
\label{eq:cruciallemma}
\Lambda(P-R)=\Omega G.
\end{equation}

The generalization of power decoding then uses this identity to derive further identities that are summarized by the following formulas (this is Theorem 3.1 in \cite{N18}), for any positive integer $s$ and $v$ such that 
$s \leq v$ we have
\begin{small}
\begin{align}
\Lambda^s P^u & =\sum_{i=0}^u \left( \Lambda^{s-i}\Omega^i\right) \binom{u}{i}R^{u-i}G^i & 
\text{$u \in \IInt{1}{s-1}$,} \label{eq:withoutmod}\\
\Lambda^s P^u & \equiv \sum_{i=0}^{s-1} \left( \Lambda^{s-i}\Omega^i\right) \binom{u}{i}R^{u-i}G^i \mod G^s& \text{$u \in \IInt{s}{v}$.}\label{eq:withmod}
\end{align}
\end{small}
The approach of \cite{N18} relies on the fact that when the number of errors is below the Johnson radius, there is a choice of $s$ and $v$ such that the total number of coefficients of the polynomials 
$\Lambda^s$, $\Lambda^{s-1}\Omega$, $\cdots$, $\Omega^s$ as well as $\Lambda^s P$,$\cdots$, $\Lambda^s P^v$ is less than or equal to the number of equations linking these coefficients coming from  \eqref{eq:withoutmod} and \eqref{eq:withmod}. In this case (and if these equations are independent) we recover them by solving the corresponding linear system. Notice that with this strategy, there is for a given value $s$ a maximal value for $u$ given by
$$
\pow{s} \eqdef \max\{u: st+u(k-1) \leq sn-1\}.
$$
It is readily seen that taking larger of $u$ only increase the number of variables in the linear system without 
being able to make it determinate if it was not determinate before.
Interestingly enough our Gr\"obner basis approach also exhibits degree falls of degree $s$ that are related to \eqref{eq:withoutmod} and \eqref{eq:withmod}. This can be understood by using an equivalent definition of $\Omega$ as
\begin{equation}
\label{eq:definitionOmega}
\Omega \eqdef - \Lambda R \div G.
\end{equation}
Notice that from this definition we directly derive two results
\begin{enumerate}
\item The coefficients of $\Omega$ are affine functions of the $\lambda_i$'s.
\item As long as $t \leq n-k$, \eqref{eq:cruciallemma} follows from \eqref{eq:definitionOmega} and  \eqref{eq:bilinear_better}. Indeed  $ \modulo{\Lambda R}{G}=\Lambda P$. This follows from  \eqref{eq:bilinear_better} and $t+k-1 \leq n-1$ implying that $\Lambda P = \Lambda R \mod G$. This and
 \eqref{eq:definitionOmega} then imply that
$
\Lambda R = -\Omega G + \Lambda P
$
which is obviously equivalent to \eqref{eq:cruciallemma}.
\end{enumerate}

Note that from these considerations, that if we equate the coefficients of the polynomials in  \eqref{eq:withoutmod} for all the degrees in $\IInt{st+u(k-1)+1}{st+u(n-1)}$ and in \eqref{eq:withmod} for all
the degrees in $\IInt{st+u(k-1)+1}{s(n-1)}$, the coefficient of the left-hand term vanishes and the coefficient in the righthand term is a polynomial of degree $s$ in the $\lambda_i$'s (this follows from the fact that the coefficients of $\Omega$ are affine functions in those $\lambda_i$'s).
This gives polynomial equations in the $\lambda_i$'s of degree $s$. In a sense, they can be viewed as generalizations at degree $s$ of the linear equations that were mentioned when Algorithm \ref{algo:simpleGrobner} is applied when $D=2$.
These 
equations  are actually produced as degree falls that are in the linear span of intermediate sets $\cS$ produced in Algorithm \ref{algo:simpleGrobner} when $D=s+1$. There are also other equations of degree $s$ produced by 
Algorithm \ref{algo:simpleGrobner} in such a case.
To explain this point it makes sense to bring in notation for the right-hand term in \eqref{eq:withoutmod} and \eqref{eq:withmod}. Let us define
\begin{small}
\begin{eqnarray*}
\chi(s, u)& \eqdef &\sum_{i=0}^u \binom{u}{i} \Lambda^{s-i}R^{u-i}\Omega^i G^i=
\Lambda^{s-u}\left( \Lambda R+\Omega G\right)^u \;\; \text{if } u<s,\\
\chi(s, u)&\eqdef& \modulo{\sum_{i=0}^{s-1} \binom{u}{i} \Lambda^{s-i}R^{u-i}\Omega^i G^i}{ G^{s} }\;\; \text{if } u\ge s
\end{eqnarray*}
\end{small}
We also let $\high{\chi(s,u)}$ be the polynomial where we dropped all the terms of degree $\leq ts+u(k-1)$ in $\chi(s,u)$, i.e. $\chi(s,u)=\sum_i a_i X^i$, then $\high{\chi(s,u)}=\sum_{i>ts+u(k-1)}  a_i X^i$.
\begin{theorem}
\label{th:main}
Let $\cI_D= \vect{\cS}{\F_q}$ where $\cS$ is the set output by Algorithm \ref{algo:simpleGrobner}. We have for all
nonnegative integers  $s$, $s'$, $u \leq \pow{s}$, $u' \leq \pow{s'}$
\begin{small}
\begin{eqnarray}
\high{\chi(s,u)} & \inc & \cI_{s+1} \label{eq:highcoeff} \\
\chi(s,u)\chi(s',u')-\chi(s+s',u+u')&  \inc & \cI_{s+s'+1} \label{eq:mixed}.
\end{eqnarray}
\end{small}
where  
$P \inc \cI_v$  ( where $P$ is a polynomial with coefficients that are polynomials in the $\lambda_i$'s and the $p_i$'s)  means that all the coefficients of $P$ belong to $\cI_v$.
\end{theorem}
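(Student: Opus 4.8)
The plan is to prove \eqref{eq:highcoeff} and \eqref{eq:mixed} together by induction on $s+s'$, with Proposition~\ref{prop:S_1_elim} as base case, separating at each stage a \emph{generic} identity valid modulo the full solution ideal from the real content, the \emph{degree control}: that the relation is produced as a degree fall of Algorithm~\ref{algo:simpleGrobner} at the stated degree, hence lies in the truncated span $\cI_{s+1}$, resp.\ $\cI_{s+s'+1}$, and not merely in the untruncated ideal. The backbone is the polynomial $W\eqdef\modulo{\Lambda R}{G}=\Lambda R+\Omega G$, an exact identity by \eqref{eq:definitionOmega}: the coefficients of $\Lambda P-W$ are, up to sign, exactly the defining bilinear equations, so $W\equiv\Lambda P$ coefficientwise modulo $\cI_2$, and $\chi(s,u)=\Lambda^{s-u}W^{u}$ \emph{exactly} when $u<s$. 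Together with \eqref{eq:cruciallemma} this gives, modulo the full ideal, $\chi(s,u)\equiv\modulo{\Lambda^{s}P^{u}}{G^{s}}$; the hypothesis $u\le\pow{s}$ enters here, since $\deg{\Lambda^{s}P^{u}}=st+u(k-1)\le sn-1$ makes $\Lambda^{s}P^{u}$ already reduced modulo $G^{s}$, so its part of degree $>st+u(k-1)$ vanishes modulo $\cI$ --- the generic shadow of \eqref{eq:highcoeff}.

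Next I would establish \eqref{eq:mixed} at level $s+s'$, assuming both statements at all lower levels. Expanding the defining sums and collecting the terms of $\chi(s,u)\chi(s',u')$ according to the total power $m$ of $\Omega G$, Vandermonde's identity $\binom{u+u'}{m}=\sum_{i+i'=m}\binom{u}{i}\binom{u'}{i'}$ shows that the coefficients coincide with those of $\chi(s+s',u+u')$ for every $m<\min(s,s')$, so that in the fully un-reduced regime $u<s,\ u'<s'$ the difference is literally the zero polynomial, via $\Lambda^{s-u}W^{u}\cdot\Lambda^{s'-u'}W^{u'}=\Lambda^{(s+s')-(u+u')}W^{u+u'}$. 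The remaining difference, coming from the binomial mismatch and from the reductions modulo $G^{s},G^{s'},G^{s+s'}$, is an explicit sum of terms each divisible by $G^{m}$ with $m\ge\min(s,s')$; using $W\equiv\Lambda P$ and the same binomial manipulations these can be re-expressed through $\chi$'s of strictly smaller first index together with multiples of the generators, to which the induction hypotheses \eqref{eq:highcoeff} and \eqref{eq:mixed} at lower levels apply, placing all their coefficients in $\cI_{s+s'+1}$.

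Then I would prove \eqref{eq:highcoeff} by induction on $s$ and, within, on $u$. The engine is the exact recursion obtained from Pascal's rule applied to the defining sums, namely $\chi(s,u+1)=\modulo{R\,\chi(s,u)+\Omega G\,\chi(s-1,u)}{G^{s}}$, which specializes at $s=1$ (where $\chi(0,u)=0$) to the telescoping identity \eqref{term:start}--\eqref{term:end} of Proposition~\ref{prop:S_1_elim}. Exactly as there, the induction hypothesis makes the high parts of $\chi(s,u)$ and $\chi(s-1,u)$ vanish after the Gaussian elimination step, so these polynomials drop to degrees $\le st+u(k-1)$ and $\le (s-1)t+u(k-1)$; feeding this into the recursion and using $u+1\le\pow{s}$ to keep the non-ideal contribution below the target degree $st+(u+1)(k-1)$, the coefficients of $\chi(s,u+1)$ of higher degree become $\F_q$-linear combinations, through the known multipliers $R,\Omega G$ and reduction modulo $G^{s}$, of coefficients already lying in $\cI_{s+1}$. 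The values of $u$ not reached by raising $u$ are supplied by raising $s$ through \eqref{eq:mixed} with the factor $\chi(1,0)=\Lambda$, giving $\Lambda\,\chi(s-1,u)\equiv\chi(s,u)$ modulo $\cI_{s+1}$.

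The main obstacle is the degree bookkeeping underlying ``degree fall at the claimed level''. One must check that every polynomial combination invoked stays of degree $\le s+1$, resp.\ $\le s+s'+1$, in the variables $\lambda_i,p_i$, so that it is genuinely realizable by Algorithm~\ref{algo:simpleGrobner} at that degree. The delicate points are: the reductions modulo $G^{s}$ mix low and high $X$-degrees, so one must use $u\le\pow{s}$ to guarantee that the genuinely non-ideal (low) part never re-enters the target high range after reduction, isolating an all-in-the-ideal factor playing the role of $\Lambda P-\Lambda R$ in Proposition~\ref{prop:S_1_elim}; and the intermediate products $\Lambda^{s}P^{u}$ carry large $p$-degree, whereas the produced equations are the coefficients of $\chi(s,u)$, of degree $\le s$ in the $\lambda_i$ and free of the $p_i$, so one must verify that the degree falls realize exactly this collapse at level $s+1$ rather than higher.
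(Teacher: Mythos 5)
Your overall architecture (induction on the indices, Pascal/Vandermonde manipulations of the defining sums, and the correct identification of the crux as degree control of the degree falls) is close in spirit to the paper's, and your observation that $\chi(s,u)\chi(s',u')-\chi(s+s',u+u')$ vanishes identically when $u<s$ and $u'<s'$ matches the paper's (trivial) base case. But the central mechanism you propose for the inductive step of \eqref{eq:highcoeff} does not work. Your recursion $\chi(s,u+1)=\modulo{R\,\chi(s,u)+\Omega G\,\chi(s-1,u)}{G^{s}}$ is a correct identity, but it is an identity in $\F_q[\lambda_0,\dots,\lambda_{t-1}][X]$: it involves only the computable polynomials $R$, $G$, $\Omega$ and the $\chi$'s, never the generators of the ideal, so by itself it cannot produce any new ideal membership. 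Concretely, write $\chi(s,u)=L+H$ with $H=\high{\chi(s,u)}\inc\cI_{s+1}$ and $\deg{L}\leq st+u(k-1)$, and similarly $\chi(s-1,u)=L'+H'$. The contributions of $H$ and $H'$ to the recursion do land in $\cI_{s+1}$, but the uncontrolled part $\modulo{RL+\Omega G\,L'}{G^{s}}$ has $X$-degree of order $st+u(k-1)+n-1$, far above the target $st+(u+1)(k-1)$, because you multiply by $R$ (degree $n-1$) and $\Omega G$ (degree $\leq n+t-1$) rather than by something of degree $k-1$. Its coefficients in the range $\IInt{st+(u+1)(k-1)+1}{sn-1}$ are just some degree-$(s+1)$ polynomials in the $\lambda_i$'s with no visible reason to lie in $\cI_{s+1}$; showing that they do is exactly what you are trying to prove, so the induction does not close.

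The missing idea --- which you gesture at in your last paragraph (``isolating an all-in-the-ideal factor playing the role of $\Lambda P-\Lambda R$'') without supplying it --- is to multiply by $P$ instead. The paper's Lemmas \ref{lem:identity} and \ref{lem:main} show that $\chi(s,u)P-\chi(s,u+1)$ is an explicit multiple of $\Lambda P-\Lambda R-\Omega G$ (whose coefficients are exactly the generators of \eqref{eq:bilinear_better}) by a factor whose coefficients have degree $\leq s-1$ in the $\lambda_i$'s; hence $\chi(s,u)P-\chi(s,u+1)\inc\cI_{s+1}$, and since $\deg{P}\leq k-1$ the product $\chi(s,u)P$ raises the controlled $X$-degree by exactly $k-1$, which is what makes $\high{\chi(s,u+1)}\inc\cI_{s+1}$ follow. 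The same lemma drives \eqref{eq:mixed}: the paper inducts on $u_1$ from the trivial case $u_1=u_2=0$ by multiplying the induction hypothesis by $P$ and replacing $P\chi(s_1,u_1)$ and $P\chi(s_1+s_2,u_1+u_2)$ by $\chi(s_1,u_1+1)$ and $\chi(s_1+s_2,u_1+u_2+1)$ modulo membership in $\cI_{s_1+s_2+1}$. Your alternative plan for \eqref{eq:mixed} in the regime $u\geq s$ or $u'\geq s'$ (re-expressing the $\bmod\ G^{s}$ discrepancies through $\chi$'s of smaller first index plus multiples of the generators) is precisely where all the work lies and is left unspecified. Finally, your fallback of deducing $\high{\chi(s,u)}$ from $\high{\chi(s-1,u)}$ via $\Lambda\chi(s-1,u)\equiv\chi(s,u)$ only covers $u\leq\pow{s-1}$, so the range $\pow{s-1}<u\leq\pow{s}$ still requires a working $u$-raising step; as proposed, you do not have one.
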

From \eqref{eq:withoutmod} and \eqref{eq:withmod} it is of course clear that $\chi(s,u)\chi(s',u')-\chi(s+s',u+u')$
belongs to the ideal generated by the polynomial equations since they basically come from the 
identity $\Lambda^sP^u\Lambda^{s'}P^{u'}=\Lambda^{s+s'}P^{u+u'}$. What is somehow surprising is that these equations are actually 
discovered at a rather small degree Gr\"obner basis computation (namely by staying at degree $s+s'+1$). Moreover these
equations only involve the $\lambda_i$'s. By inspection of the behavior of the Gr\"obner basis computation, it seems that the linear equations that we produce later on are first produced by degree falls only involving these equations of degree $s$. It is therefore tempting to change the Gr\"obner basis decoding procedure strategy: 
instead of feeding Algorithm \ref{algo:simpleGrobner} with the initial system \eqref{eq:bilinear} or \eqref{eq:bilinear_better} we run it with the equations of degree $s$ given by Theorem \ref{th:main}. Once we have recovered the $\lambda_i$'s in this way we recover the $p_i$'s by solving a linear system as explained earlier. How this strategy behaves on non-trivial examples is now explained in the next section.

This result is proved in the following subsection.
\subsection{Proof of Theorem \ref{th:main}}

It will be convenient here to notice that $\chi(s,s)$ has a slightly simpler expression which avoids the reduction modulo $G^s$. 
\begin{lemma}\label{lem:chiss}
$$\chi(s,s) = (\Lambda R + \Omega G)^s.$$
\end{lemma}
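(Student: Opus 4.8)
The plan is to reduce everything to the binomial theorem together with a degree bound on $\Lambda P$. Since the case $u=s$ falls in the regime $u \geq s$, the definition to unwind is
$$\chi(s,s) = \modulo{\sum_{i=0}^{s-1}\binom{s}{i}\Lambda^{s-i}R^{s-i}\Omega^i G^i}{G^s}.$$
First I would expand $(\Lambda R + \Omega G)^s$ by the binomial theorem, obtaining $\sum_{i=0}^{s}\binom{s}{i}\Lambda^{s-i}R^{s-i}\Omega^i G^i$. This is exactly the sum appearing in the definition of $\chi(s,s)$, except that it also carries the top index $i=s$, whose contribution is $\binom{s}{s}\Omega^s G^s = \Omega^s G^s$. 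Hence
$$\sum_{i=0}^{s-1}\binom{s}{i}\Lambda^{s-i}R^{s-i}\Omega^i G^i = (\Lambda R + \Omega G)^s - \Omega^s G^s.$$

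Next I would pass to the reduction modulo $G^s$. Since $\Omega^s G^s$ is divisible by $G^s$, it vanishes after reduction, so the subtracted term is invisible and
$$\chi(s,s) = \modulo{(\Lambda R + \Omega G)^s}{G^s}.$$
The remaining step is to show that this reduction is in fact trivial, i.e. that $(\Lambda R + \Omega G)^s$ already has degree strictly below $\deg{G^s} = sn$. For this I would invoke \eqref{eq:cruciallemma} (equivalently the definition \eqref{eq:definitionOmega}), which gives $\Omega G = \Lambda(P-R)$ and hence the key cancellation $\Lambda R + \Omega G = \Lambda P$. As $\deg{\Lambda}=t$ and $\deg{P}\leq k-1$, this yields $\deg{(\Lambda P)^s} \leq s(t+k-1) < sn$ using $t \leq n-k$, so $(\Lambda R + \Omega G)^s$ is unchanged by reduction modulo $G^s$ and we conclude $\chi(s,s)=(\Lambda R + \Omega G)^s$.

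The only genuine subtlety, and the step I would flag, is this degree control. Taken separately, $\Lambda R$ and $\Omega G$ both have degree as large as $t+n-1$, so a naive bound on their $s$-th power would exceed $sn$ and the reduction modulo $G^s$ would not be vacuous. It is precisely the cancellation $\Lambda R + \Omega G = \Lambda P$ furnished by \eqref{eq:cruciallemma} that collapses the degree back below $n$ and renders the modular reduction a no-op; the rest is the binomial identity and routine bookkeeping.
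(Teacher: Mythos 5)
Your proof is correct and follows essentially the same route as the paper: add the vanishing $i=s$ term $\Omega^s G^s$, recognize the binomial expansion of $(\Lambda R+\Omega G)^s$, and observe that the reduction modulo $G^s$ is vacuous. In fact you are more explicit than the paper on the final step, correctly pinpointing that the cancellation $\Lambda R+\Omega G=\Lambda P$ from \eqref{eq:cruciallemma} is what keeps the degree below $sn$ (under $t\leq n-k$), a justification the paper leaves implicit.
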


\begin{proof}
$\chi(s,s)$ is defined as 
\begin{eqnarray*}
\chi(s,s) &\eqdef &\modulo{ \sum_{i=0}^{s-1} \binom{s}{i} \Lambda^{s-i}R^{s-i} \Omega^i G^i}{G^s}\\
& = & \modulo{ \sum_{i=0}^{s} \binom{s}{i} \Lambda^{s-i}R^{s-i} \Omega^i G^i}{G^s} \\
&= & \modulo{(\Lambda R+\Omega G)^s}{G^s}\\
&= & (\Lambda R+\Omega G)^s
\end{eqnarray*}
\end{proof}

It will also be helpful to observe that $\chi(s,u)$ and $\chi(s,u+1)$ are related by the following identity
\begin{lemma}
\label{lem:identity}
\begin{eqnarray*}
\chi(s,u)P-\chi(s,u+1) &= & \Lambda^{s-u-1}(\Lambda R+\Omega G)^u \left( \Lambda P - \Lambda R - \Omega G\right) \;\; \text{for $u \in \IInt{0}{s-1}$}\\
\modulo{\chi(s,u)P-\chi(s,u+1)}{G^s} & = & \modulo{\left(\Lambda P-\Lambda R-\Omega G \right)\sum_{i=0}^{s-1} \binom{u}{i} \Lambda^{s-1-i}R^{u-i}\Omega^i G^i}{G^s} \;\; \text{for $u \in \IInt{s}{\pow{s}-1}$.}
\end{eqnarray*}
\end{lemma}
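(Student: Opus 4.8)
\subsection*{Proof proposal}

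The plan is to treat the two identities separately, since the first holds as an identity of polynomials (no reduction) while the second only holds modulo $G^s$. In both cases the single substitution that drives everything is $\Lambda P = (\Lambda R + \Omega G) + (\Lambda P - \Lambda R - \Omega G)$, i.e.\ isolating the ``key equation'' factor $\Lambda P - \Lambda R - \Omega G$ that we want to appear on the right-hand side.

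For the range $u \in \IInt{0}{s-1}$ I would start from the closed form $\chi(s,u) = \Lambda^{s-u}(\Lambda R + \Omega G)^u$, which is the definition for $u < s$ and which also holds for $u = s$ by Lemma~\ref{lem:chiss}; this lets me handle $\chi(s,u+1)$ uniformly even at the boundary $u+1 = s$. Substituting both closed forms into $\chi(s,u)P - \chi(s,u+1)$ and factoring out the common $\Lambda^{s-u-1}(\Lambda R + \Omega G)^u$ immediately gives $\Lambda^{s-u-1}(\Lambda R+\Omega G)^u(\Lambda P - \Lambda R - \Omega G)$, which is the claimed expression. This case is a one-line factorization.

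For the range $u \in \IInt{s}{\pow{s}-1}$ I would avoid carrying reductions around by computing first with the unreduced sums. Set $A_u \eqdef \sum_{i=0}^{s-1}\binom{u}{i}\Lambda^{s-i}R^{u-i}\Omega^i G^i$, so that $\chi(s,u)=\modulo{A_u}{G^s}$ and hence $\modulo{\chi(s,u)P-\chi(s,u+1)}{G^s}=\modulo{A_u P - A_{u+1}}{G^s}$. Since every term of $A_u$ carries a factor $\Lambda^{s-i}$ with $i\le s-1$, I can write $\Lambda^{s-i}P = \Lambda^{s-1-i}\,\Lambda P$ and apply the substitution above. The key-equation piece factors out as $(\Lambda P - \Lambda R - \Omega G)\sum_{i=0}^{s-1}\binom{u}{i}\Lambda^{s-1-i}R^{u-i}\Omega^i G^i$, which is exactly the target right-hand side; the remaining task is to show that the complementary piece $\sum_{i=0}^{s-1}\binom{u}{i}\Lambda^{s-1-i}(\Lambda R + \Omega G)R^{u-i}\Omega^i G^i$ equals $A_{u+1}$ modulo $G^s$.

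That last step is the only real computation and the main obstacle. Expanding $(\Lambda R + \Omega G)$ splits the sum into one piece with summand $\binom{u}{i}\Lambda^{s-i}R^{u+1-i}\Omega^i G^i$ and one with summand $\binom{u}{i}\Lambda^{s-1-i}R^{u-i}\Omega^{i+1}G^{i+1}$; reindexing $i\mapsto i-1$ in the second and applying Pascal's rule $\binom{u}{i}+\binom{u}{i-1}=\binom{u+1}{i}$ recombines them into $\sum_{i=0}^{s-1}\binom{u+1}{i}\Lambda^{s-i}R^{u+1-i}\Omega^i G^i = A_{u+1}$ plus a single boundary term $\binom{u}{s-1}R^{u+1-s}\Omega^s G^s$ arising from $i=s$. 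This boundary term is divisible by $G^s$, so it vanishes under $\modulo{\cdot}{G^s}$, which closes the argument. The only delicate points are bookkeeping: correctly matching the $i=0$ and $i=s$ endpoints of the two index ranges after the shift, and confirming that the sole out-of-range term indeed carries the factor $G^s$.
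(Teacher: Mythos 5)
Your proposal is correct and follows essentially the same route as the paper: the first identity by factoring the closed form $\Lambda^{s-u}(\Lambda R+\Omega G)^u$ (invoking Lemma~\ref{lem:chiss} at the boundary $u+1=s$), and the second by pulling out $\Lambda P$, splitting off the key-equation factor, and verifying via Pascal's rule that the complementary sum reproduces $\chi(s,u+1)$ up to a single term divisible by $G^s$. The boundary term $\binom{u}{s-1}R^{u+1-s}\Omega^s G^s$ you identify is exactly the one the paper discards modulo $G^s$.
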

\begin{proof}
For $u \in \IInt{0}{s-1}$ we have (for the case $u=s-1$  we use Lemma \ref{lem:chiss} for the term
$\chi(s,u+1)$):
\begin{eqnarray*}
\chi(s,u)P-\chi(s,u+1) &= &\Lambda^{s-u}P(\Lambda R + \Omega G)^u -\Lambda^{s-u-1}P(\Lambda R + \Omega G)^{u+1} \\
& = & \Lambda^{s-u-1}(\Lambda R+\Omega G)^u \left( \Lambda P - \Lambda R - \Omega G\right).
\end{eqnarray*}
For $u \in \IInt{s}{\pow{s}-1}$ we observe that 
\begin{eqnarray}
\modulo{\chi(s,u)P}{G^s} & = &  \modulo{P \sum_{i=0}^{s-1} \binom{u}{i} \Lambda^{s-i}R^{u-i}\Omega^i G^i}{ G^{s} } \nonumber\\ &= &\modulo{\Lambda P \sum_{i=0}^{s-1} \binom{u}{i} \Lambda^{s-1-i}R^{u-i}\Omega^i G^i }{G^s}\label{eq:chisu}
\end{eqnarray}
and
\begin{eqnarray*}
(\Lambda R+ \Omega G)\sum_{i=0}^{s-1} \binom{u}{i} \Lambda^{s-1-i}R^{u-i}\Omega^i G^i
& = & \sum_{i=0}^{s-1} \binom{u}{i} \Lambda^{s-i}R^{u+1-i}\Omega^i G^i +
\sum_{i=0}^{s-1} \binom{u}{i} \Lambda^{s-1-i}R^{u-i}\Omega^{i+1} G^{i+1}\\
& = & \Lambda^s R^{u+1} + \binom{u}{s-1} R^{u-s+1}\Omega^sG^s \\
& & +
\sum_{i=1}^{s-1} \left( \binom{u}{i}+ \binom{u}{i-1}\right)\Lambda^{s-i}R^{u+1-i}\Omega^i G^i\\
& = &\binom{u}{s-1}R^{u-s+1}\Omega^sG^s+ \sum_{i=0}^{s-1} \binom{u+1}{i}\Lambda^{s-i}R^{u+1-i}\Omega^i G^i
\end{eqnarray*}
This implies
\begin{equation}
\label{eq:chisupu}
\chi(s,u+1) = \modulo{(\Lambda R+ \Omega G)\sum_{i=0}^{s-1} \binom{u}{i} \Lambda^{s-1-i}R^{u-i}\Omega^i G^i}{G^s}.
\end{equation}
The second equation of the lemma follows directly from \eqref{eq:chisu} and \eqref{eq:chisupu}.
\end{proof}

A last lemma will be helpful now
\begin{lemma}
\label{lem:main}
For all
nonnegative integers $s$ and  $u <\pow{s}$
\begin{eqnarray}
\chi(s,u)P - \chi(s,u+1)& \inc& \cI_{s+1} \label{eq:basic}\\
\high{\chi(s,u+1)} & \inc & \cI_{s+1}. \label{eq:high}
\end{eqnarray}
\end{lemma}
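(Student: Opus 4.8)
The plan is to prove \eqref{eq:basic} and \eqref{eq:high} simultaneously by a single induction on $u$ (with $s$ fixed), because the two statements feed into one another. The structural fact I will use throughout is a closure property of the spaces $\cI_D$: since we work with a graded order and $\cS$ is a fixed point of Algorithm \ref{algo:simpleGrobner} (so the rows $mf$ of $\Mac_D(\cS)$ with $\deg{mf}\le D$ add nothing new to the row space), for any $f\in\cI_D$ and any polynomial $h$ in the $\lambda_i,p_i$ with $\deg{hf}\le D$ one has $hf\in\cI_D$. Indeed, in a graded echelon basis $f$ is a combination of basis elements of degree $\le\deg{f}$, so every monomial-times-basis-element product occurring in $hf$ has degree $\le\deg{hf}\le D$ and is already a row of $\Mac_D(\cS)$, hence in $\cI_D$. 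I will also use monotonicity $\cI_2\subseteq\cI_{s+1}$, and that reduction modulo $G$ or $G^s$ preserves the property $\inc\cI_{s+1}$: reduction is $\F_q$-linear on the vector of $X$-coefficients and $\cI_{s+1}$ is an $\F_q$-vector space, so each coefficient of the remainder is an $\F_q$-combination of coefficients of the original polynomial.

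For \eqref{eq:basic} I feed the two identities of Lemma \ref{lem:identity} into this machinery. The decisive observation is that the $X$-coefficients of $S\eqdef\Lambda P-\Lambda R-\Omega G$ are exactly the bilinear equations defining the system: from \eqref{eq:definitionOmega} we have $\Lambda R+\Omega G=\modulo{\Lambda R}{G}$, whence $S=\Lambda P-\modulo{\Lambda R}{G}$, which is precisely the polynomial whose vanishing is \eqref{eq:bilinear_better}. Thus each coefficient of $S$ lies in $\cI_2\subseteq\cI_{s+1}$ and has degree $\le 2$ in the variables. For $u<s$, Lemma \ref{lem:identity} gives $\chi(s,u)P-\chi(s,u+1)=\Lambda^{s-u-1}(\Lambda R+\Omega G)^u\,S$, and the coefficients of the first factor are forms of degree $\le s-1$ in the $\lambda_i$'s; hence each $X$-coefficient of the product is an $\F_q$-combination of terms $h\cdot g$ with $\deg{h}\le s-1$ and $g\in\cI_{s+1}$ of degree $\le 2$. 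Since $\deg{hg}\le s+1$, closure gives $hg\in\cI_{s+1}$, proving \eqref{eq:basic} for $u<s$.

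The case $u\ge s$ is the delicate one, because the identity now holds only modulo $G^s$. The second identity of Lemma \ref{lem:identity} reads $\modulo{\chi(s,u)P-\chi(s,u+1)}{G^s}=\modulo{S\,\Psi}{G^s}$ with $\Psi\eqdef\sum_{i=0}^{s-1}\binom{u}{i}\Lambda^{s-1-i}R^{u-i}\Omega^iG^i$, whose coefficients are again forms of degree $\le s-1$; exactly as above $S\Psi\inc\cI_{s+1}$, and since reduction preserves this, $\modulo{\chi(s,u)P-\chi(s,u+1)}{G^s}\inc\cI_{s+1}$. It remains to control the quotient by $G^s$, i.e. the $X$-coefficients of $\chi(s,u)P$ of degree $\ge sn$ (those of $\chi(s,u+1)$ vanish there). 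Such a coefficient is an $\F_q$-combination of products $p_b\cdot c$, where $c$ is an $X$-coefficient of $\chi(s,u)$ of degree in $\IInt{sn-(k-1)}{sn-1}$; because $u\le\pow{s}-1$ forces $ts+u(k-1)\le sn-k<sn-(k-1)$, all such $c$ lie in $\high{\chi(s,u)}$, which is in $\cI_{s+1}$ by \eqref{eq:high} at index $u$ (available in the induction) and has coefficients of degree $\le s$. Hence $\deg{p_b c}\le s+1$ and closure gives $p_b c\in\cI_{s+1}$; as division by the monic $G^s$ expresses the quotient's coefficients $\F_q$-linearly in these top coefficients, the quotient is $\inc\cI_{s+1}$, and adding back the remainder yields \eqref{eq:basic} for $u\ge s$.

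Finally I deduce \eqref{eq:high} by induction on $u$. The base case is immediate: $\chi(s,0)=\Lambda^s$ is monic of degree $ts$, so $\high{\chi(s,0)}=0$. For the step, write $\chi(s,u+1)=\chi(s,u)P-\bigl(\chi(s,u)P-\chi(s,u+1)\bigr)$ and apply the operator dropping all terms of degree $\le ts+(u+1)(k-1)$. The second bracket is $\inc\cI_{s+1}$ by \eqref{eq:basic} at index $u$, so its truncation is too. In $\chi(s,u)P$, split $\chi(s,u)=\high{\chi(s,u)}+(\text{low part})$: the low part has degree $\le ts+u(k-1)$, so after multiplication by $P$ it has degree $\le ts+(u+1)(k-1)$ and is discarded entirely, while $\high{\chi(s,u)}\inc\cI_{s+1}$ (induction hypothesis) has coefficients of degree $\le s$, so multiplying by the degree-one $p_i$'s keeps us in $\cI_{s+1}$ by closure. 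This gives $\high{\chi(s,u+1)}\inc\cI_{s+1}$ and closes the induction. The only genuinely delicate point is the $u\ge s$ case of \eqref{eq:basic}: everything there hinges on reduction modulo $G^s$ being harmless coefficientwise and on the quotient being captured by $\high{\chi(s,u)}$, which is exactly why the two statements must be proved in one intertwined induction rather than separately.
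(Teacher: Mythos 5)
Your proof is correct and follows essentially the same route as the paper's: the same intertwined induction on $u$ driven by the two identities of Lemma \ref{lem:identity}, with the case $u<s$ handled by multiplying the original equations (the coefficients of $\Lambda P-\Lambda R-\Omega G$) by degree-$(s-1)$ coefficients, and the case $u\ge s$ handled by combining the reduction modulo $G^s$ with the induction hypothesis $\high{\chi(s,u)}\inc\cI_{s+1}$ to control the discarded multiple of $G^s$. You are in fact somewhat more explicit than the paper on two points it leaves implicit — the closure of $\cI_D$ under multiplication by low-degree polynomials via the fixed-point property of Algorithm \ref{algo:simpleGrobner}, and the precise accounting of the quotient by $G^s$ through the top coefficients of $\chi(s,u)P$ — but the underlying argument is the same.
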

\begin{proof}
We will prove this lemma by induction on $u$. For $u \leq s-1$ we observe from Lemma \ref{lem:identity} that
\begin{eqnarray}
\chi(s,u)P-\chi(s,u+1) &= &
\Lambda^{s-u-1}(\Lambda R+\Omega G)^u \left( \Lambda P - \Lambda R - \Omega G\right) \label{eq:direct}\\
& \inc & \cI_{s+1} \nonumber
\end{eqnarray}
The last point follows from the fact that \eqref{eq:direct} implies that the coefficients of $\chi(s,u)P-\chi(s,u+1)$ 
are clearly in the space spanned by $\cS$ once we multiply the original $f_i$'s (i.e. the coefficients of 
$\Lambda P-\Lambda R-\Omega G$) by all monomials of degree $\leq s-1$) because the coefficients 
of $\Lambda^{s-u-1}(\Lambda R+\Omega G)^u$ are polynomials of degree $\leq s-1$ in the $\lambda_i$'s.

This also implies that $\high{\chi(s,u+1)} \inc \mathcal{I}_{s+1}$, since $\deg{\chi(s,u)P}=ts+u(k-1)$.
Now let us assume that $\chi(s,u-1)P-\chi(s,u)\inc \mathcal{I}_{s+1}$ and $\chi(s,u)_H\inc\mathcal{I}_{s+1}$, for some $s \leq u< \pow{s}$. From Lemma \ref{lem:identity} we know that
$$
\modulo{\chi(s,u)P - \chi(s,u+1)}{G^s}= \modulo{\left(\Lambda P-\Lambda R-\Omega G \right)\sum_{i=0}^{s-1} \binom{u}{i} \Lambda^{s-1-i}R^{u-i}\Omega^i G^i}{G^s}.
$$
Therefore 
$$
\modulo{\chi(s,u)P - \chi(s,u+1)}{G^s} \inc \cI_{s+1}
$$
since clearly (i)
$$\left(\Lambda P-\Lambda R-\Omega G \right)\sum_{i=0}^{s-1} \binom{u}{i} \Lambda^{s-1-i}R^{u-i}\Omega^i G^i
\inc  \cI_{s+1}$$
$\left(\Lambda P-\Lambda R-\Omega G \right)\sum_{i=0}^{s-1} \binom{u}{i} \Lambda^{s-1-i}R^{u-i}\Omega^i G^i$.

By the induction hypothesis $\high{\chi(s,u)} \inc \mathcal{I}_{s+1}$ and such coefficients have degree $s$, then the coefficients corresponding to degrees $>ts+(u+1)(k-1)$ of $\chi(s,u)P$ belong to $\mathcal{I}_{s+1}$ too. 
Since $\modulo{\chi(s,u)P - \chi(s,u+1)}{G^s}= \modulo{\chi(s,u)P }{G^s}- \chi(s,u+1)$, it follows that 
\[
\chi(s,u)P-\chi(s,u+1)\in \mathcal{I}_{s+1}.
\]
Thus, we also have $\chi(s,u+1)_H\in \mathcal{I}_{s+1}$.
\end{proof}

We are ready now to prove Theorem \ref{th:main}.
\begin{proof}[Proof of Theorem \ref{th:main}]
We proceed by induction on $u_1$ and $u_2$. We first observe that we trivially have
$\chi(s_1,0)\chi(s_2,0)-\chi(s_1+s_2,0) \inc \cI_{s_1+s_2+1}$ since
$$
\chi(s_1,0)\chi(s_2,0)-\chi(s_1+s_2,0) = \Lambda^{s_1}\Lambda^{s_2}-\Lambda^{s_1+s_2}=0.
$$
Now assume that we have
$$
\chi(s_1,u_1)\chi(s_2,u_2)-\chi(s_1+s_2,u_1+u_2) \inc \cI_{s_1+s_2+1},
$$
for some positive integers $s_1$ and $s_2$ and non-negative integers $u_1 < \pow{s_1}$ and
$u_2 \leq \pow{s_2}$. Since $\chi(s_1,u_1)\chi(s_2,u_2)$ and $\chi(s_1+s_2,u_1+u_2)$ are polynomials where all coefficients are polynomials in the $\lambda_i$'s of degree $\leq s_1+s_2$,  we also have 
\begin{equation}
\label{eq:multbyP}
P \left( \chi(s_1,u_1)\chi(s_2,u_2)-\chi(s_1+s_2,u_1+u_2)\right)  \inc \cI_{s_1+s_2+1}.
\end{equation}

By Lemma \ref{lem:main} we know that
$$
P \chi(s_1,u_1)-\chi(s_1,u_1+1) \inc \cI_{s_1+1}.
$$
This implies
\begin{equation}
\label{eq:chisuuu}
P \chi(s_1,u_1)\chi(s_2,u_2)-\chi(s_1,u_1+1)\chi(s_2,u_2) \inc \cI_{s_1+s_2+1}.
\end{equation}
On the other hand, still by Lemma \ref{lem:main}, we have
\begin{equation}
\label{eq:chissum}
P \chi(s_1+s_2,u_1+u_2)-\chi(s_1+s_2,u_1+u_2+1) \inc \cI_{s_1+s_2+1}.
\end{equation}
From \eqref{eq:chisuuu} and \eqref{eq:chissum} we derive that
\begin{equation}\label{eq:sum}
-P \chi(s_1,u_1)\chi(s_2,u_2)+\chi(s_1,u_1+1)\chi(s_2,u_2) 
+ P \chi(s_1+s_2,u_1+u_2)-\chi(s_1+s_2,u_1+u_2+1) \inc  \cI_{s_1+s_2+1}
\end{equation}
\eqref{eq:sum} and \eqref{eq:multbyP} imply that
$$
\chi(s_1,u_1+1)\chi(s_2,u_2) - \chi(s_1+s_2,u_1+u_2+1) \inc  \cI_{s_1+s_2+1}.
$$
This proves the theorem by induction (the induction on $u_2$ follows directly from the fact we can exchange the role of $u_1$ and $u_2$).
\end{proof}

\section{Experimental Results}\label{sec:experiments}
In this section, we compare the behavior of a $D$-Gr\"obner basis
computation on the bilinear system~\eqref{eq:bilinear_better}, with a
system involving equations in the $\lambda_j$'s only. We give
examples where Johnson's bound is attained and passed.

The systems in $\lambda_j$'s we use contains equations $\chi(s,u)_H$
and some relations
$\chi(s,u)\chi(s',u')-\chi(s+s',u+u')$. Experimentally, they are
linearly dependent from $\chi(s+s'-1,u+u')\chi(1,0)-\chi(s+s',u+u')$
and $\chi(s,\pow{s})_H$. Moreover,
$ \chi(s-1,u)\chi(1,0) \mod G^{s-1}=\chi(s,u) \mod G^{s-1}$,
so we will  consider equations $\mathcal M_{s,u}$ defined by
\begin{align}
\left(\chi(s-1, u)\chi(1,0)-\chi(s, u)\right) \div {G^{s-1}}.\tag{$\mathcal M_{s,u}$}
\end{align}
We do not add equations that are polynomially dependent from
$\chi(s,\pow{s})_H$ or $\mathcal M_{s+1,\pow{s}}$ at degree at most $D$,
and thus unnecessary for the computation.

Tables~\ref{tab:n64},~\ref{tab:n256} and~\ref{tab:n37} show results for $[n,k]_q$ taking values $[64,27]_{64}$, $[256,63]_{256}$ and
$[37,5]_{61}$.  The column $\#\lambda_j$ indicates the number of
remaining $\lambda_j$'s after elimination of the linear ones from the
$\chi(1,*)_H$ relations. The column ``Eq'' indicates the equations
used. The column ``\#Eq'' contains the
degrees of the equations\footnote{2:45 means that the system contains
  45 equations of degree 2.}.

We do our experiments using the \texttt{GroebnerBasis(S,D)} function
in the computer algebra system \texttt{magma v2.25-6}. The practical
complexity $\mathbb C$ is given by the \texttt{magma} function
\texttt{ClockCycles}. For instance, on our machine with an 
Intel\textsuperscript{\textregistered}
Xeon\textsuperscript{\textregistered} 2.00GHz processor, $2^{30.9}$
clock cycles are done in 1 second, $2^{36.8}$ in 1 minute and
$2^{42.7}$ in 1 hour.  ``Max Matrix'' indicates the size of the
largest matrix during the process.  The complexities include the computation of the
equations $\chi(i,j)_H$ and $\mathcal M_{i,j}$ that could be improved.

For systems where the number of remaining $\lambda_j$'s is small
compared to the number of $p_i$'s, e.g. Table~\ref{tab:n64} or
Table~\ref{tab:n256}, it is clearly interesting to compute a Gröbner
basis for a system containing only polynomials in $\lambda_j$'s: even
if the maximal degree $D$ is larger than for the bilinear system, the
number of variables is much smaller and the computation is faster. For
instance for $[n,k]_q=[64, 27]_{64}$ in Table~\ref{tab:n64}, on
Johnson bound $t=23$ the Gröbner basis for the bilinear system
requires more than 6 hours of computation and 47 GB of memory, whereas
the computation in $\lambda_j$'s only takes less than a second. For
$t=24$ we couldn't solve the bilinear system directly, whereas the
system in $\lambda_j$'s only solves in less than a minute.

\setlength\tabcolsep{2pt}
\begin{table}[htbp]
  \caption{Experimental results for a $[n, k]_q = [64, 27]_{64}$
    RS-code. System~\eqref{eq:bilinear_better} contains 26 variables
    $p_i$. Johnson's bound is $t=23$.}
  \label{tab:n64}
  \centering
  \begin{tabular}[c]{|c|@{}c@{}|*{7}{c|}}
    \hline
    $t$ &  $\# \lambda_j$ & Eq. & \#Eq. & $D$ & Max Matrix & $\mathbb C$ \\\hline
    $19$ & $1$ & \eqref{eq:bilinear_better} & 2:45 & 2 & $65 \times 57$ & $2^{22.2}$\\
        && $\chi(2,3)_H$ & 2:11 & 2 & $45 \times 28$ & $2^{23.7}$\\\hline
    $20$ & $3$ & \eqref{eq:bilinear_better} & 2:46 & 3 & $1522 \times 1800$ & $2^{26.5}$\\
        && $\chi(2,3)_H$& 2:9 & 2 & $47 \times 28$ & $2^{24.4}$ \\\hline
    $21$ & $5$ & \eqref{eq:bilinear_better} & 2:47 & 3 & $1711 \times 2889$ &$2^{27.1}$\\
        &&$\chi(2,3)_H$ + $\chi(3,4)_H$ & 2:7, 3:24  & 3 & $66 \times 56$ & $2^{26.8}$ \\\hline
    $22$ & $7$ & \eqref{eq:bilinear_better} & 2:48 & 4 & $31348 \times 35972$ & $2^{36.1}$\\
        &&$\chi(2,3)_H$ + $\chi(3,4)_H$ & 2:5, 3:21  & 4 & $271 \times 283$ & $2^{27.6}$\\\hline
    $23$ & $9$ & \eqref{eq:bilinear_better} &  2:49 & 5 & $428533 \times 406773$ & $2^{45.4}$\\
        && $\chi(2,3)_H$ + $\mathcal M_{3, 3}$ & 2:4, 3:22 & 5 & $1466 \times 1641$ & $2^{30.1}$\\\hline
    $24$ & $11$ & \eqref{eq:bilinear_better} & 2:50  & $\ge 6$ & -- & -- \\
        && $\mathcal M_{3, 3}$ & 2:1, 3:23 & 7 & $28199 \times 23536$ & $2^{35.8}$\\\hline
  \end{tabular}
\end{table}

Table~\ref{tab:n256} gives an example where the number of
$\lambda_j$'s variables is quite large, but still smaller than the
number of $p_i$'s. The benefit of using equations in $\lambda_j$'s
only is clear.

\begin{table}[htbp]
  \caption{Experimental results for a $[n, k]_q = [256, 63]_{256}$
    RS-code. System~\eqref{eq:bilinear_better} contains 62 variables
    $p_i$. Johnson's bound is $t=130$.}
  \label{tab:n256}
  \centering
  \begin{tabular}[c]{|c|@{}c@{}|p{1.7cm}|p{2.0cm}|*{7}{c|}}
    \hline
    $t$ &  $\# \lambda_j$ & Eq. & \#Eq. & $D$ & Max Matrix & $\mathbb C$ \\\hline
    $120$ & $36$ & \eqref{eq:bilinear_better} & 2:182 & 3 &$20023 \times 128018$ & $2^{38.0}$\\
    && $\chi(2,3)_H$ & 2:85 & 2 &$119 \times 703$ & $2^{34.5}$\\\hline
    $121$ & $39$ & \eqref{eq:bilinear_better} & 2:183 & 3 &  $21009 \times 143741$ & $2^{38.9}$\\
        && $\mathcal M_{2,2}$& 2:111 & 3 & $9780 \times 8517$ &  $2^{35.0}$\\\hline
    $122$ & $42$ & \eqref{eq:bilinear_better} & 2:184 & 3 &  $22050 \times 160434$ &$2^{39.7}$\\
        && $\mathcal M_{2,2}$&2:113  & 3 & $4858 \times 14189$ & $2^{35.3}$ \\\hline
    $123$ & $45$ & \eqref{eq:bilinear_better} & 2:185 & 3 & $23112 \times 178090$  & $2^{40.1}$\\
        & & $\mathcal M_{2,2}$ & 2:115 & 3 & $5289 \times 17295$ & $2^{35.8}$\\\hline
    $124$ & $48$ & \eqref{eq:bilinear_better} &  2:186 & $\ge 4$ & -- & -- \\
             &&  $\mathcal M_{2,2}+ \mathcal M_{4,6}$  & 2:117, 3:1, 4:189 & 4 & $164600 \times 270725$ & $2^{45.2}$\\\hline
  \end{tabular}
\end{table}

On the contrary, Table~\ref{tab:n37} shows that for a small value of
$k$ compared to the number of $\lambda_j$'s, the maximal degree for
the bilinear system is smaller than the one for a system involving only $\lambda_j$'s, but the total number of variables is almost the same, hence it is more
interesting to solve directly the bilinear system. Moreover, here computing the $\mathcal M_{i,j}$ equations (that are equations in $\lambda_j$'s of degree $i$) takes time. Note that, for
$t\ge 26$ we may have several solutions: the Gr\"obner basis
computation performs a list decoding and returns all the solutions.

\begin{table}[htbp]
  \caption{Experimental results for a $[n, k]_q = [37, 5]_{61}$
    RS-code. System~\eqref{eq:bilinear_better} contains 4 variables
    $p_i$. Johnson's bound is $t=24$, Gilbert-Varshamov's bound is
    $t=28$. 
  }
\label{tab:n37}
  \centering
  \begin{tabular}[c]{|c|@{}c@{}|p{2.5cm}|p{1.2cm}|*{7}{c|}}
    \hline
    $t$ &  $\# \lambda_j$ & Eq. & \#Eq. & $D$ & Max Matrix & $\mathbb C$ \\\hline
    $24$ & $12$ & \eqref{eq:bilinear_better} & 2:28 & 3 & $1065 \times 1034$ & $2^{26.0}$\\
    && $\mathcal M_{2, 3}$ & 2:37 & 3 & $454 \times 454$ & $2^{28.0}$\\\hline
    $25$ & $15$ & \eqref{eq:bilinear_better} & 2:29 & 3 & $2520 \times 1573$ & $2^{28.0}$\\
    && $\chi(2,5)_H$+ $\chi(3,8)_H$+ $\mathcal M_{2,2}$ + $\mathcal M_{3,5}$& 2:25, 3:40 & 4 & $3193 \times 3311$ & $2^{34.3}$ \\\hline
    $26$ & $18$ & \eqref{eq:bilinear_better} & 2:30 & 4 & $20446 \times 15171$ &$2^{33.1}$\\
    &&$\chi(2,5)_H$+ $\mathcal M_{2,2}$ + $\mathcal M_{3, 5}$ + $\mathcal M_{4, 8}$ &2:25, 3:37, 4:37  & 5 & $38796 \times 22263$ & $2^{38.1}$ \\\hline
    $27$ & $21$ & \eqref{eq:bilinear_better} & 2:31 & 4 & $27366 \times 24894$ & $2^{36.0}$\\
\hline
  \end{tabular}
\end{table}

\section{Concluding remarks}
This paper demonstrates that using a standard Gr\"obner basis computation on the bilinear system \eqref{eq:bilinear_better} for decoding a Reed-Solomon code is of polynomial complexity below  Sudan's radius.
The Gr\"obner basis computation reveals polynomial equations of small degree involving the coefficients $\lambda_i$  of the error locator polynomial. They are related to the  power decoding approach \cite{N18}. We give a theorem explaining why these  polynomial relations are obtained at a surprisingly small degree. This is a first step for understanding why the Gr\"obner works
surprisingly well beyond the Sudan radius and is successful  by staying at a small degree. We have also explored another way of using this approach, namely by feeding some of the aforementioned polynomial relations directly in a Gr\"obner basis computation. This results in some cases in a considerable complexity gain. We have considered some of the examples given in \cite{N18} and show that this Gr\"obner basis approach can still be effective slightly beyond Johnson's bound. We also remark that contrarily to the power decoding approach which is restricted to the case where there is a unique solution to the decoding problem,  the Gr\"obner basis computation is also able to compute all solutions. This approach opens new roads for decoding algebraically a Reed-Solomon code.

\end{document}